\newtheorem{theorem}{Theorem}
\newtheorem{proposition}{Proposition}
\newtheorem{corollary}{Corollary}
\begin{document}

\title{Anonymizing Unstructured Data}

\numberofauthors{2}

\author{
\alignauthor Rajeev Motwani\\
\affaddr{Department of Computer Science}\\
\affaddr{Stanford University, Stanford, CA, USA}\\
\email{rajeev@cs.stanford.edu}
\alignauthor Shubha U. Nabar\\
\affaddr{Department of Computer Science}\\
\affaddr{Stanford University, Stanford, CA, USA}\\
\email{sunabar@cs.stanford.edu}
}

\maketitle

\begin{abstract}
In this paper we consider the problem of anonymizing datasets in which each individual is associated with a \textit{set of items} that constitute private information about the individual. Illustrative datasets include market-basket datasets and search engine query logs. We formalize the notion of \textit{$k$-anonymity for set-valued data} as a variant of the $k$-anonymity model for traditional relational datasets. We define an optimization problem that arises from this definition of anonymity and provide $O(k\log k)$ and $O(1)$-approximation algorithms for the same. We demonstrate applicability of our algorithms to the America Online query log dataset.
\end{abstract}

\section{Introduction}
\vspace{0.1in}
Consider a dataset containing detailed information about the private actions of individuals, e.g., a market-basket data- set or a dataset of search engine query logs. Market-basket datasets contain information about items bought by individuals and search engine query logs contain detailed information about the queries posed by users and the results that were clicked on. There is often a need to publish such data for research purposes. Market-basket data, for instance, could be used for association rule mining and for the design and testing of recommendation systems. Query logs could be used to study patterns of query refinement, develop algorithms for query suggestion and improve the overall quality of search. 

The publication of such data, however, poses a challenge as far as the privacy of individual users is concerned. Even after removing all personal characteristics of individuals such as actual usernames and ip addresses, the publication of such data is still subject to privacy attacks from attackers with partial knowledge of the private actions of individuals. 
Our work in this paper is motivated by two such recent data releases and privacy attacks on them. 

In August of 2006, America Online (AOL) released a large portion of its search engine query logs for research purposes. The dataset contained $20$ million queries posed by $650,000$ AOL users over a $3$ month period. Before releasing the data, AOL ran a simplistic anonymization procedure wherein every username was replaced by a random identifier. Despite this basic protective measure, the New York Times~\cite{nyt} demonstrated how the queries  themselves could essentially reveal the identities of users. For example, user $4417749$ revealed herself to be a resident of Gwinnett County in Lilburn, GA, by querying for businesses and services in the area. She further revealed her last name by querying for relatives. There were only 14 citizens with her last name in Gwinnett County, and the user was quickly revealed to be Thelma Arnold, a 62 year old woman living in Georgia. From this point on, researchers at the New York Times could look at all of the queries posed by Ms. Arnold over the 3 month period. The publication of the query log data thus constituted a very serious privacy breach.

In October of 2006, Netflix announced the \$$1$-million Netflix Prize for improving their movie recommendation system. As a part of the contest Netflix publicly released a dataset containing $100$ million movie ratings created by $500,000$ Netflix subscribers over a period of $6$ years. Once again, a simplistic anonymization procedure of replacing usernames with random identifiers was used prior to the release. Nevertheless, it was shown that $84$\% of the subscribers could be uniquely identified by an attacker who knew $6$ out of $8$ movies that the subscriber had rated outside of the top $500$~\cite{arvind}. 

The commonality between the AOL and Netflix datasets is that each individual's data is essentially a set of items. Further this set of items is both identifying of the individual as well as private information about the individual, and partial knowledge of this set of items is used in the privacy attack. In the case of the Netflix data (representative of market-basket data), for instance, it is the set of movies that a subscriber rated, and in the case of the AOL data, it is the set of queries that a user posed, also called the \textit{user session}.

Motivated by these examples, as well as by the very real need for releasing such datasets for research purposes, we propose a notion of anonymity for set-valued data in this paper. Informally, a dataset is said to be $k$-anonymous if every individual's ``set of items" is identical to those of at least $k-1$ other individuals. So a user in the Netflix dataset would be $k$-anonymous if at least $k-1$ other users rated exactly the same set of movies; a user in the AOL query logs would be $k$-anonymous if at least $k-1$ other users posed exactly the same set of queries.

One simple way to achieve $k$-anonymity for a dataset would be to simply remove every item from every user's set, or to add every item from the universe of items to every single set. Naturally this would radically distort the dataset rendering it useless for analyses. So instead, to provide greater utility than such a simplistic scheme, we seek to make the minimal number of changes possible to the dataset in order to achieve the anonymity requirements. We provide $O(k\log k)$ and $O(1)$-approximation algorithms for this optimization problem. Further we demonstrate how these algorithms can be scaled for application to massive modern day datasets such as the AOL query logs. To summarize our contributions.

\begin{itemize}
\item We define the notion of $k$-anonymity for set-valued data and introduce an optimization problem for minimally achieving $k$-anonymity in Section~\ref{sec:kanondefn}.
\item We provide algorithms with approximation factors of $O(k\log k)$ and $O(1)$ for the optimization problem in Section~\ref{approxalgs}.
\item In Section~\ref{aolexp}, we demonstrate how our algorithms can be scaled for application to massive datasets and experiment on the AOL logs . 
\end{itemize}

Before proceeding further, note that illustrative datasets used as motivating examples above also contain further user information: time stamp information for when a rating was given and the actual rating itself in the Netflix data; time stamp information for when a query was posed and the query result that was clicked on in the AOL data. However for the purposes of this paper, we ignore these other attributes of the dataset and discuss how they could potentially be dealt with in Section~\ref{discussion}. Indeed the privacy attacks mentioned above did not involve knowledge of these other attributes, and therefore the anonymization problem on even just the reduced set of attributes is important to study. 
 
We will next briefly review related work where we distinguish our problem from the traditional $k$-anonymity problem that has been studied for relational datasets.

\section{Related Work}
\vspace{0.1in}
There has been considerable prior work on anonymizing traditional relational datasets such as medical records. The most widely studied anonymity definitions for such datasets are \textit{$k$-anonymity}~\cite{aggarwal,meyerson,park,sweeney,lefevre} and its variants, \textit{$l$-diversity} ~\cite{ashwin} and \textit{$t$-closeness}~\cite{suresh}. In all these definitions, certain public attributes of the dataset are initially determined to be ``quasi-identifiers''. For instance, in a dataset of medical records, attributes such as Date-of-Birth, Gender and Zipcode would qualify as quasi-identifiers since in combination they can be used to uniquely identify 87\% of the U.S. population~\cite{sweeney}. A dataset is then said to be $k$-anonymous if every record in the dataset is identical to at least $k-1$ other records on its quasi-identifying attribute values. The idea is that privacy is achieved if every individual is hidden in a crowd of size at least $k$. Anonymization algorithms achieve the $k$-anonymity requirement by \textit{suppressing} and \textit{generalizing} the quasi-identifying attribute values of records. A trivial way to achieve $k$-anonymity would be to simply suppress every single attribute value in the dataset, but this would completely destroy the {\em utility} of the dataset. Instead, in order to preserve utility, the algorithms attempt to achieve the anonymity requirement with a minimum number of suppressions and generalizations. 

The kinds of datasets that we consider in this paper differ from traditional relational datasets in two ways. First, each database record in our scenario essentially corresponds to a set of items. The database records could thus be of variable length and high dimensionality. Further, there is no longer a clear distinction between private attributes and quasi identifiers. A user's queries are both private information about the user as well as identifying of the user himself. Similarly, in the case of market-basket data, the set of items bought by an individual are private information about the individual and at the same time can be used to identify the individual. Our definition of anonymity and anonymization algorithms are applicable for such set-valued data.

In \cite{kalnis} the authors study the problem of anonymizing market-basket data. They propose a notion of anonymity similar to $k$-anonymity where a limit is placed on the number of private items of any individual that could be known to an attacker beforehand. The authors provide generalization algorithms to achieve the anonymity requirements. For example, an item `milk' in a user's basket may be generalized to `dairy product' in order to protect it. In contrast, the techniques we propose consider additions and deletions to the dataset instead of generalizations. Further, we demonstrate applicability of our algorithms to search engine query log data as well where there is no obvious underlying hierarchy that can be used to generalize queries. 

Our $O(1)$-approximation algorithm is derived by reducing the anonymization problem to a clustering problem. Clustering techniques for achieving anonymity have also been studied in~\cite{clusteringanonymity}, however here the authors seek to minimize the maximum radius of the clustering, whereas we wish to minimize the sum of the Hamming distances of points to their cluster centers.

In \cite{philip} the authors propose the notion of $(h, k, p)$-coherence for anonymizing transactional data. Here once again there is a division of items into public and private items. The goal of the anonymization is to ensure that for any set of $p$ public items, either no transaction contains this set, or at least $k$ transactions contain it, and no more than $h$ percent of these transactions contain a common private item. The authors consider the minimal number of suppressions required to achieve these anonymity goals, however no theoretical guarantees are given.

Besides the $k$-anonymization based techniques, there has also been considerable work on anonymizing datasets by the addition of noise or perturbation ~\cite{agrawalsrikant, alexandre, dilys}. We do not consider perturbation-based approaches in this paper. 

With regards to search engine query logs, there has been work on identifying privacy attacks both on users~\cite{jasmine} as well as on companies whose websites appear in query results and get clicked on~\cite{ricardo}. We do not consider the latter kind of privacy attack in this paper. \cite{jasmine} considers an anonymization procedure wherein keywords in queries are replaced by secure hashes. The authors show that such a procedure is susceptible to statistical attacks on the hashed keywords, leading to privacy breaches. There has also been work on defending against privacy attacks on users in~\cite{eytan}. This line of work considers heuristics such as the removal of infrequent queries and develops methods to apply such techniques on the fly as new queries are posed. In contrast, we consider a static scenario wherein a search engine would like to publicly release an existing set of query logs.  

\section{Definitions}\label{sec:kanondefn}
\vspace{0.1in}
Let $D = \{S_1, \ldots, S_n\}$ be a dataset containing $n$ records. Each record $S_i$ is a set of items. Formally $S_i$ is a non-empty subset of a universe of items, $U = \{e_1, e_2, \ldots, e_m\}$. We can then define an anonymous dataset as follows.

\begin{definition} ($k$-Anonymity for Set-Valued Data)\label{kanondefn}
We say that $D$ is $k$-anonymous if every record $S_i \in D$ is identical to at least $k-1$ other records. 
\end{definition}

Given this definition, we can now define an optimization problem that asks for the minimum number of transformations to be made to a dataset to obtain an anonymized dataset.

\begin{definition}(The $k$-Anonymization Problem for Set-Val- ued Data)
Given a dataset $D = \{S_1, \ldots, S_n\}$, find the minimum number of items that need to be added to or deleted from the sets $S_1, \ldots, S_n$ to ensure that the resulting dataset $D'$ is $k$-anonymous.
\end{definition}

We illustrate the $k$-anonymization problem with an example. 

\begin{example}
Consider the dataset in Figure~\ref{kanonexamplea}. The dataset in Figure~\ref{kanonexampleb} represents a $2$-anonymous transformation that is obtained by making $2$ additions and $1$ deletion. The items $e_3$ and $e_2$ are added to records $S_2$ and $S_3$ respectively while the item $e_6$ is deleted from record $S_4$. The resulting dataset consists of two $2$-anonymous groups: $\{S_1, S_2, S_3\}$ and $\{S_4, S_5\}$. 

\begin{figure}
\begin{center}
\subfigure[Original Dataset]
{\begin{tabular}{|c|c|}
\hline
ID & Contents \\ \hline
$S_1$ & $\{e_1, e_2, e_3\}$\\
$S_2$ & $\{e_1, e_2\}$\\
$S_3$ & $\{e_1, e_3\}$\\
$S_4$ & $\{e_4, e_5, e_6\}$\\
$S_5$ & $\{e_4, e_5\}$\\\hline
\end{tabular}\label{kanonexamplea}}~~~~~
\subfigure[$2$-Anonymous Transformation]
{\begin{tabular}{|c|c|}
\hline
ID & Contents \\ \hline
$S_1$ & $\{e_1, e_2, e_3\}$\\
$S_2$ & $\{e_1, e_2, e_3\}$\\
$S_3$ & $\{e_1, e_2, e_3\}$\\
$S_4$ & $\{e_4, e_5\}$\\
$S_5$ & $\{e_4, e_5\}$\\\hline
\end{tabular}\label{kanonexampleb}}
\end{center}
\caption{$2$-Anonymization}\label{kanonexample}
\end{figure}
\end{example}

As a more concrete example, in the case of market-basket data, the dataset consists of records, where each record is a \textit{basket of items purchased} by an individual. The $k$-anonymization problem then is to add or delete items to individuals' baskets so that every basket is identical to at least $k-1$ other baskets. 

In the case of search engine query logs, the records correspond to user sessions. Instead of treating each user session as a set of queries, we considered a relaxed problem and treat each user session as a \textit{set of query terms or keywords}. See Section~\ref{aolexp} for the details. The $k$-anonymization problem then becomes one of adding or deleting keywords to or from user sessions to ensure that each user session becomes identical to at least $k-1$ other user sessions. Since no two user sessions are likely to be similar on all the queries, we consider a slightly modified problem in our experiments. Each user session is first separated into ``topic-based'' \textit{threads}, and our goal becomes one of anonymizing these threads instead of the original sessions. The result is an increase in the utility of the released dataset. Again, Section~\ref{aolexp} elaborates on the details. 


More generally, the dataset can be thought of as a bipartite graph, with sets (user sessions/baskets/individuals) represented as nodes on the left hand side and items of the universe (keywords searched for/items purchased/movies rated) as nodes on the right hand side. The $k$-anonymization problem then is to add or delete edges in the bipartite graph so that every node on the left hand side is identical to at least $k-1$ other nodes.  

Depending on the application, it may make sense to restrict the set of permissible operations to only additions or only deletions, however in this paper we consider the most general version of the problem that permits both.

\section{Approximation Algorithms}\label{approxalgs}
\vspace{0.1in}
Given these definitions, we are now ready to devise algorithms for optimally achieving $k$-anonymity. We first draw connections between the $k$-anonymization problem for set-valued data and other optimization problems that have previously been studied in literature, namely, the suppression-based $k$-anonymization problem for relational data and the {\em load-balanced facility location problem}. The reductions to these problems automatically give us the approximation algorithms we desire. In what follows we do not describe the algorithms themselves, rather only the reductions. The algorithms can be found in~\cite{meyerson,aggarwal,park,munagala,karger,zoya}. 

A natural question that arises is whether traditional $k$-anonymity algorithms that involve suppressions and generalizations can be used for the $k$-anonymization problem for set-valued data as defined in Section~\ref{kanondefn}. To this end, we first translate the set-valued dataset to a traditional relational dataset.

\subsubsection*{Transforming $D$ to $R_D$}
A dataset $D = \{S_1, \ldots, S_n\}$ can be transformed to a traditional relational dataset $R_D$ by creating a binary attribute for every item $e_i$ in the universe and a tuple for every set $S_i$. Each tuple will then be a vector in $\{0,1\}^m$. The $1$'s correspond to items in the universe that a set contains and the $0$'s correspond to those that it does not\footnote{Note that at no point do our approximation algorithms ever explicitly construct these bit vectors. Rather they operate directly on the set representations of the tuples, computing intersections of pairs of sets. The algorithms therefore scale with the maximum set size rather than $m$. The bit vector representations have only been used here for ease of exposition.}. For example, the dataset from Figure~\ref{kanonexamplea} translates to the dataset in Figure~\ref{reldataset}.

\begin{figure}
\begin{center}
\begin{tabular}{|c|c|c|c|c|c|c|}
\hline
ID & $e_1$ & $e_2$ & $e_3$ & $e_4$ & $e_5$ & $e_6$ \\ \hline
$S_1$ & $1$ & $1$ & $1$ & $0$ & $0$ & $0$ \\
$S_2$ & $1$ & $1$ & $0$ & $0$ & $0$ & $0$ \\
$S_3$ & $1$ & $0$ & $1$ & $0$ & $0$ & $0$ \\
$S_4$ & $0$ & $0$ & $0$ & $1$ & $1$ & $1$ \\
$S_5$ & $0$ & $0$ & $0$ & $1$ & $1$ & $0$ \\
\hline
\end{tabular}
\end{center}
\caption{Dataset from Figure~\ref{kanonexamplea} as a relational dataset}\label{reldataset}
\end{figure}

The $k$-anonymization problem over $D$ now translates to the following problem over $R_D$:

\begin{definition}($k$-Anonymization via Flips)
Given a dataset $R_D$ over a binary alphabet $\{0,1\}$, flip as few $0$'s to $1$'s and $1$'s to $0$'s in $R_D$ as possible so that every tuple is identical to at least $k-1$ other tuples. 
\end{definition} 

It is trivial to see that there is a one-to-one correspondence between feasible solutions for the $k$-anonymization problem over $D$ and the flip-based $k$-anonymization problem over $R_D$. 

\begin{proposition}\label{firstprop}
Any feasible solution, ${\cal S}_{flip}$, to the flip-based $k$-anonymization problem over $R_D$ can be converted to a feasible solution, ${\cal S}_{\pm}$, of the same cost for the $k$-anonymiza- tion problem over $D$ and vice versa.
\end{proposition}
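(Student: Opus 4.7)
My plan is to exhibit an explicit, cost-preserving bijection between the two solution spaces and verify that $k$-anonymity is preserved under it. The transformation $D \mapsto R_D$ already gives a bijection between the sets $S_i \in D$ and the binary vectors (tuples) of $R_D$: for each $i$ and each $j \in \{1,\ldots,m\}$, the $j$-th coordinate of the tuple corresponding to $S_i$ is $1$ iff $e_j \in S_i$. I would proceed by lifting this bijection from the static objects to the edit operations that act on them.

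First, I would define the correspondence between atomic operations. Given a solution ${\cal S}_{\pm}$ for the $k$-anonymization problem over $D$, construct ${\cal S}_{flip}$ as follows: for every addition of $e_j$ to $S_i$ in ${\cal S}_{\pm}$ (which, by feasibility, must have $e_j \notin S_i$), flip the $(i,j)$ entry of $R_D$ from $0$ to $1$; for every deletion of $e_j$ from $S_i$, flip the $(i,j)$ entry from $1$ to $0$. In the reverse direction, a flip of $(i,j)$ from $0$ to $1$ becomes an addition of $e_j$ to $S_i$, and a flip from $1$ to $0$ becomes a deletion. Since each atomic operation on one side maps to exactly one atomic operation on the other, the total cost (number of edits) is preserved.

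Second, I would verify that feasibility is preserved. The key observation is that the bit-vector-to-set bijection commutes with the edit operations just defined: if $S'_i$ is the set obtained from $S_i$ after applying the additions/deletions in ${\cal S}_{\pm}$, then the bit vector obtained from the original tuple of $S_i$ after applying the corresponding flips in ${\cal S}_{flip}$ is precisely the characteristic vector of $S'_i$. Consequently, two modified sets $S'_i, S'_j$ are equal iff their modified bit vectors are equal. Hence the resulting dataset $D'$ satisfies Definition~\ref{kanondefn} iff the resulting $R_{D'}$ has every tuple identical to at least $k-1$ others, i.e. ${\cal S}_{flip}$ is feasible iff ${\cal S}_{\pm}$ is feasible.

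There is essentially no hard step here: the whole statement is a bookkeeping exercise built on the fact that a set $S_i \subseteq U$ is isomorphic to its indicator vector in $\{0,1\}^m$, and that additions/deletions are isomorphic to $0/1$ flips. The only point worth being careful about is implicitly assuming that no ``wasted'' operations occur (e.g., adding an element already present, or flipping a bit back and forth). Since both problems ask for the minimum number of operations, we may assume without loss of generality that no operation is applied twice to the same $(i,j)$ pair and that each operation changes its target; the bijection is then clean, and cost and feasibility transfer in both directions.
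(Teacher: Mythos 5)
Your proof is correct and follows essentially the same route as the paper's own (very brief) proof sketch: map each $0\to1$ flip to an addition and each $1\to0$ flip to a deletion, and vice versa, noting that cost and $k$-anonymity are preserved under the set/bit-vector correspondence. Your write-up just spells out the bookkeeping (commutation of edits with the bijection, no wasted operations) that the paper leaves implicit.
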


\noindent\textit{Proof Sketch.} For every $0$ that is flipped to a $1$ in ${\cal S}_{flip}$, simply add the corresponding item to the corresponding set in ${\cal S}_{\pm}$, and for every $1$ that is flipped to a $0$, delete the item from the set.\\

Now the flip-based $k$-anonymization problem can be solved using suppression-based $k$-anonymization techniques for traditional relational datasets studied in~\cite{meyerson,aggarwal,park}. The problem studied here essentially boils down to the following. 

\begin{definition}($k$-Anonymization via Suppressions)
Given a dataset $R_D$ over a binary alphabet $\{0,1\}$, what are the minimum number of $0's$ and $1's$ in $R_D$ that need to be converted to *'s to ensure that every tuple is identical to at least $k-1$ other tuples. 
\end{definition}

Now it is easy to see that the following holds. 

\begin{proposition}
Any feasible solution ${\cal S}_*$ to the suppression-based k-anonymization problem can be converted to a feasible flip-based solution ${\cal S}_{flip}$ using Algorithm~\ref{flip*alg}.
\end{proposition}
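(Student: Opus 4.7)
The plan is to unpack Algorithm~\ref{flip*alg} implicitly and verify two properties of its output: feasibility as a flip-based $k$-anonymization, and a cost bound relating $|{\cal S}_{flip}|$ to $|{\cal S}_*|$.

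The key structural fact I would establish first is that in the suppression-based model two tuples count as identical only if they agree symbol-for-symbol, including having $*$'s in exactly the same columns. Consequently ${\cal S}_*$ partitions the rows of $R_D$ into equivalence classes each of size at least $k$, and within any class $C$ every column is either (i) unsuppressed, with the same bit repeated across all tuples of $C$, or (ii) suppressed in \emph{every} tuple of $C$. This is the one nontrivial observation in the proof; the rest is bookkeeping.

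With this in hand I would describe the algorithm: for each equivalence class $C$ and each column $j$ of type (ii), inspect the original bits of the $|C|$ tuples in column $j$, pick a majority value $b\in\{0,1\}$ (ties broken arbitrarily), and flip the minority bits so that every tuple in $C$ carries $b$ in column $j$. Columns of type (i) are left untouched. Feasibility then follows immediately: after processing, all tuples of a given class are bitwise identical and the class size is unchanged, so every tuple of ${\cal S}_{flip}$ coincides with at least $k-1$ others.

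For the cost comparison I would argue class by class and column by column. For any pair $(C,j)$ with $j$ suppressed across $C$, the algorithm spends $\min(\#\text{zeros},\#\text{ones})\le\lfloor|C|/2\rfloor$ flips, whereas ${\cal S}_*$ paid exactly $|C|$ stars at the same pair; summing over all $(C,j)$ yields $|{\cal S}_{flip}|\le \tfrac{1}{2}|{\cal S}_*|$, so the conversion is not only feasible but in fact loses nothing in cost. The only place where care is required is the partition-into-classes claim, which depends on the convention that $*$ matches only $*$; once that is in place, the rest is a one-line counting argument per column per class.
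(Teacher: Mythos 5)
Your proposal is correct and follows essentially the same route as the paper, which treats the proposition as immediate from Algorithm~\ref{flip*alg}: within each $k$-anonymous group of ${\cal S}_*$ (where, by the convention that $*$ matches only $*$, unsuppressed columns already agree and suppressed columns are starred across the whole group), majority flipping makes all tuples in the group bitwise identical without changing group sizes, so $k$-anonymity is preserved. Your added per-group, per-column counting argument giving $\text{Cost}({\cal S}_{flip})\leq \tfrac{1}{2}\text{Cost}({\cal S}_*)$ is a mild strengthening of the bound $\text{Cost}(S_{flip})\leq \text{Cost}(S_*)$ that the paper invokes later in the proof of Theorem~\ref{firsttheorem}.
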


\begin{algorithm}[H]
\caption{Converting ${\cal S}_*$ to ${\cal S}_{flip}$}\label{flip*alg}
\begin{algorithmic}[1]
\STATE //input: $R_D$, ${\cal S}_*$
\FOR {every $k$-anonymous group of tuples $G$ in ${\cal S}_*$}
\FOR {every column $C$}
\STATE //$C_G$ = $C$ values for rows in $G$ in $R_D$
\IF {number of $1$'s in $C_G >$ number of $0$'s}
\STATE flip the $0$'s in $C_G$ to $1$'s
\ELSE 
\STATE flip the $1$'s in $C_G$ to $0$'s
\ENDIF
\ENDFOR
\ENDFOR
\end{algorithmic}
\end{algorithm}

The algorithm essentially takes every $k$-anonymous group of tuples in ${\cal S}_*$. Then for any column in the group that is suppressed (*ed out), it replaces the column for that group entirely with $1$'s or entirely with $0$'s depending on which action would involve a fewer number of flips in the original dataset $R_D$.

\begin{example}
Figure~\ref{*flipfig} shows an example of an original dataset, a $2$-anonymous dataset ${\cal S_*}$ obtained via suppressions, and a flip-based $2$-anonymous dataset ${\cal S}_{flip}$ obtained by applying Algorithm~\ref{flip*alg} to ${\cal S}_*$. In both the solutions, the two $2$-anonymous groups are $\{S_1, S_4, S_5\}$ and $\{S_2, S_3, S_6\}$.

\begin{figure}
\begin{center}
\subfigure[Original dataset]
{\begin{tabular}{|c|c|c|c|}
\hline
ID & $e_1$ & $e_2$ & $e_3$ \\ \hline
$S_1$ & 1 & 1 & 0 \\
$S_2$ & 0 & 0 & 1 \\
$S_3$ & 1 & 0 & 1 \\
$S_4$ & 1 & 0 & 0 \\
$S_5$ & 1 & 0 & 0 \\
$S_6$ & 1 & 0 & 1\\ \hline
\end{tabular}}\\
\subfigure[${\cal S}_*$]
{\begin{tabular}{|c|c|c|c|}
\hline
ID & $e_1$ & $e_2$ & $e_3$ \\ \hline
$S_1$ & 1 & * & 0 \\
$S_2$ & * & 0 & 1 \\
$S_3$ & * & 0 & 1 \\
$S_4$ & 1 & * & 0 \\
$S_5$ & 1 & * & 0 \\
$S_6$ & * & 0 & 1\\ \hline
\end{tabular}}~~~~~
\subfigure[${\cal S}_{flip}$]
{\begin{tabular}{|c|c|c|c|}
\hline
ID & $e_1$ & $e_2$ & $e_3$ \\ \hline
$S_1$ & 1 & 0 & 0 \\
$S_2$ & 1 & 0 & 1 \\
$S_3$ & 1 & 0 & 1 \\
$S_4$ & 1 & 0 & 0 \\
$S_5$ & 1 & 0 & 0 \\
$S_6$ & 1 & 0 & 1\\ \hline
\end{tabular}}
\end{center}
\caption{${\cal S}_{flip}$ is obtained from ${\cal S}_*$ via Algorithm~\ref{flip*alg}}\label{*flipfig}
\end{figure}
\end{example}

Now we can show the following about Algorithm~\ref{flip*alg}. 

\begin{theorem}\label{firsttheorem}
For a given dataset $R_D$, let the cost of a feasible solution ${\cal S}_*$ to the suppression-based $k$-anonymization problem be within a factor $\alpha$ of the cost of the optimal solution. Then the cost of ${\cal S}_{flip}$ obtained by applying Algorithm~\ref{flip*alg} to ${\cal S}_*$ is within a factor of $O(k\alpha)$ of the cost of the optimal solution for the flip-based $k$-anonymization problem.
\end{theorem}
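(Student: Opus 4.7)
The plan is to establish the chain
\[
\mathrm{cost}(\mathcal{S}_{flip}) \,\le\, \frac{1}{2}\,\mathrm{cost}(\mathcal{S}_*) \,\le\, \frac{\alpha}{2}\,\mathrm{OPT}_{sup} \,\le\, \frac{(2k-1)\alpha}{2}\,\mathrm{OPT}_{flip},
\]
where $\mathrm{OPT}_{sup}$ and $\mathrm{OPT}_{flip}$ denote the optimal costs of the suppression- and flip-based $k$-anonymization problems on $R_D$. The middle inequality is exactly the hypothesis that $\mathcal{S}_*$ is an $\alpha$-approximation, so only the first and third inequalities require proof.

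For the first inequality, fix a $k$-anonymous group $G$ in $\mathcal{S}_*$. A column $C$ that is not suppressed within $G$ must already have identical original values across all tuples of $G$, since otherwise two tuples of $G$ would fail to be identical after suppression; so Algorithm~\ref{flip*alg} performs no flips on $C$. On a column $C$ that is suppressed within $G$, Algorithm~\ref{flip*alg} flips only the minority value among the $|G|$ original cells, costing at most $|G|/2$ flips, whereas $\mathcal{S}_*$ spent $|G|$ suppressions on the same pair $(G,C)$. Summing over all groups and suppressed columns yields $\mathrm{cost}(\mathcal{S}_{flip}) \le \mathrm{cost}(\mathcal{S}_*)/2$.

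The main step is the third inequality, $\mathrm{OPT}_{sup} \le (2k-1)\,\mathrm{OPT}_{flip}$. Start from an optimal flip solution with partition $P^*$ and observe, via the elementary inequality $\min(a_1,b_1) + \min(a_2,b_2) \le \min(a_1+a_2,\,b_1+b_2)$ applied column by column, that splitting any group of size $\ge 2k$ into two parts of size $\ge k$ (always possible when $|G|\ge 2k$) can only decrease the per-column flip cost while preserving $k$-anonymity. Iterating this reduction, I may assume all groups of $P^*$ have size in $[k,2k-1]$. Using the same partition, construct a suppression solution by suppressing every cell of each pair $(G,C)$ whose original values within $G$ disagree; call such a pair discrepant. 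This is feasible because within each group all tuples agree on non-suppressed columns and share the same suppression pattern. Its cost equals $\sum_{(G,C)\text{ discrepant}} |G|$, which is at most $(2k-1)$ times the number of discrepant pairs, and since each discrepant pair contributes at least one flip to $\mathrm{OPT}_{flip}$, at most $(2k-1)\,\mathrm{OPT}_{flip}$.

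The main obstacle is exactly that third inequality: without the small-group reduction, a large optimal flip group with only a single minority cell per column would force $\mathrm{OPT}_{sup}$ arbitrarily far above $O(k)\,\mathrm{OPT}_{flip}$, and the min-splitting inequality is the ingredient that makes the reduction cost-free. Chaining the three bounds then gives $\mathrm{cost}(\mathcal{S}_{flip}) = O(k\alpha)\,\mathrm{OPT}_{flip}$, as claimed.
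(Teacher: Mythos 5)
Your proposal is correct and follows essentially the same route as the paper: the chain $\mathrm{cost}(\mathcal{S}_{flip}) \le \mathrm{cost}(\mathcal{S}_*) \le \alpha\,\mathrm{OPT}_{sup} \le \alpha(2k-1)\,\mathrm{OPT}_{flip}$, with the last inequality obtained by assuming the optimal flip groups have size at most $2k-1$ and suppressing every column containing a flip. Your write-up is in fact a bit more careful than the paper's, since you justify the size-$[k,2k-1]$ assumption via the min-splitting inequality (which the paper merely asserts) and sharpen the first step to a factor $1/2$.
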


\begin{proof}
Let $OPT_*$ and $OPT_{flip}$ be the optimal solutions to the suppression-based and flip-based $k$-anonymization problems over $R_D$ respectively. Then it is easy to see that $\text{Cost}(OPT_*) \leq (2k -1)\text{Cost}(OPT_{flip})$. This is because every $k$-anonymous group of tuples in $OPT_{flip}$ consists of at most $2k - 1$ tuples. Further, this group can be converted to a $k$-anonymous group obtained by suppressions by *ing out any column that contains a flip (essentially the reverse of Algorithm~\ref{flip*alg}). 

It is also easy to see that the cost of any solution $S_{flip}$ obtained by applying Algorithm~\ref{flip*alg} to a solution $S_*$ is less than the cost of $S_*$. This gives us the following set of inequalities and our desired result. 

\begin{eqnarray*}
\text{Cost}(S_{flip}) &\leq& \text{Cost}(S_*)\\
& \leq& \alpha \text{Cost}(OPT_*)\\
& \leq &\alpha(2k-1)\text{Cost}(OPT_{flip})
\end{eqnarray*}
\end{proof}

The best possible suppression-based $k$-anonymization algorithm thus gives us a good flip-based anonymization algorithm through the application of Algorithm~\ref{flip*alg}. Since the suppression-based algorithm from~\cite{park} has an approximation ratio of $O(\log k)$, Theorem~\ref{firsttheorem} together with Proposition~\ref{firstprop} gives us the following result.

\begin{corollary}
There exists an $O(k\log k)$-approximation algorithm to the $k$-anonymization problem for set-valued data. 
\end{corollary}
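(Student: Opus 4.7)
The plan is to chain together the three ingredients that the paper has already set up: Proposition~\ref{firstprop}, Theorem~\ref{firsttheorem}, and the $O(\log k)$-approximation for suppression-based $k$-anonymization cited from~\cite{park}. The argument is essentially a composition of approximation ratios, so no new combinatorial work is needed — the corollary really is a routine consequence of what has been proved.

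First, given an input instance $D$ for the set-valued $k$-anonymization problem, I would build the relational dataset $R_D$ as described in the paragraph preceding Proposition~\ref{firstprop}. Then I would invoke the suppression-based $k$-anonymization algorithm of~\cite{park} on $R_D$ to obtain a solution ${\cal S}_*$ whose cost is within $\alpha = O(\log k)$ of $\mathrm{Cost}(OPT_*)$. Next, I would feed ${\cal S}_*$ into Algorithm~\ref{flip*alg} to produce a flip-based solution ${\cal S}_{flip}$; Theorem~\ref{firsttheorem} immediately gives
\[
\mathrm{Cost}({\cal S}_{flip}) \;\leq\; O(k\alpha)\cdot \mathrm{Cost}(OPT_{flip}) \;=\; O(k\log k)\cdot \mathrm{Cost}(OPT_{flip}).
\]
Finally, I would use the constructive direction of Proposition~\ref{firstprop} to translate ${\cal S}_{flip}$ back into a solution ${\cal S}_{\pm}$ for the original set-valued $k$-anonymization problem on $D$, having the same cost.

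To close the loop, I need to argue that $\mathrm{Cost}(OPT_{flip})$ on $R_D$ equals the cost of the optimal set-valued solution on $D$. This is the other direction of Proposition~\ref{firstprop}: any optimal set-valued solution translates into a flip-based solution of the same cost on $R_D$, so the two optima coincide. Combining everything, the solution ${\cal S}_{\pm}$ is within an $O(k\log k)$ factor of the optimal $k$-anonymization cost for $D$, proving the corollary.

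The only step that requires any thought is verifying that the cost is preserved under each reduction, but this is handled cleanly by Proposition~\ref{firstprop} in both directions; the $O(k)$ blow-up is absorbed in Theorem~\ref{firsttheorem}, and the $O(\log k)$ factor is imported wholesale from~\cite{park}. I do not expect any real obstacle — the corollary is essentially a bookkeeping statement once the earlier results are in hand.
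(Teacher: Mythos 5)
Your proposal is correct and follows exactly the paper's route: the paper derives the corollary by combining the $O(\log k)$ suppression algorithm of~\cite{park} with Theorem~\ref{firsttheorem} and the cost-preserving correspondence of Proposition~\ref{firstprop}, which is precisely the chain you spell out. Your only addition is to make explicit that both directions of Proposition~\ref{firstprop} are needed so that $\mathrm{Cost}(OPT_{flip})$ equals the optimal set-valued cost, which is a faithful elaboration of the paper's one-line argument rather than a different approach.
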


The suppression algorithm from~\cite{park} essentially considers all possible partitions of the dataset into $k$-anonymous groups and chooses a good one using a set-cover type greedy algorithm.

The translation of $D$ to $R_D$ also enables the insight that the $k$-anonymization problem over set-valued data is essentially a clustering problem.  Each set can be viewed as vector in $\{0,1\}^m$. The optimal solution to the following clustering problem then gives us an optimal solution to the $k$-anonymization problem for set-valued data.

\begin{definition}(The $k$-Group Clustering Problem) Given a set of points in $\{0,1\}^m$, cluster the points into groups of size at least $k$ and assign cluster centers in $\{0,1\}^m$ so that the sum of the Hamming distances of the points to their cluster centers is minimized. 
\end{definition}

The following proposition tells us that there is a one-to-one correspondence between feasible solutions to the $k$-group clustering problem and the $k$-anonymization problem for set-valued data. 

\begin{proposition}\label{clusteringprop} Given a solution, ${\cal S}_{group}$, to the $k$-group clustering problem over a dataset $R_D$, we can obtain a solution ${\cal S}_{\pm}$ of the same cost to the $k$-anonymization problem over $D$ and vice versa. 
\end{proposition}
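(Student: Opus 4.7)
The plan is to exhibit the cost-preserving correspondence directly in both directions, using the observation that the bit-vector representation of each set $S_i$ in $R_D$ is just the indicator vector of $S_i$ over the universe $U$, so that flipping a bit in position $j$ corresponds exactly to either adding $e_j$ to $S_i$ (if the bit goes from $0$ to $1$) or deleting $e_j$ from $S_i$ (if it goes from $1$ to $0$). Hence one addition or deletion in $\mathcal{S}_\pm$ contributes exactly $1$ to the Hamming distance, and vice versa.

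First I would handle the forward direction. Given a clustering solution $\mathcal{S}_{group}$ with clusters $C_1, \ldots, C_t$ and centers $c_1, \ldots, c_t \in \{0,1\}^m$, I construct $\mathcal{S}_\pm$ as follows: for every point $p \in C_j$ corresponding to a set $S_i$, transform $S_i$ by adding every element in the set represented by $c_j$ that is not already in $S_i$, and deleting every element in $S_i$ not in the set represented by $c_j$. The number of additions plus deletions for $S_i$ is exactly $d_H(p, c_j)$, so summing over all points gives $\mathrm{cost}(\mathcal{S}_\pm) = \sum_{j} \sum_{p \in C_j} d_H(p, c_j) = \mathrm{cost}(\mathcal{S}_{group})$. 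After the transformation, all sets in cluster $C_j$ become identical (they all equal the set encoded by $c_j$), and since $|C_j| \geq k$ by assumption, $D'$ is $k$-anonymous.

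For the reverse direction, given a feasible $\mathcal{S}_\pm$, let $D'$ be the resulting $k$-anonymous dataset and partition $D'$ into its equivalence classes under set equality; call these groups $G_1, \ldots, G_t$. Each $|G_j| \geq k$ by $k$-anonymity. Take $c_j \in \{0,1\}^m$ to be the indicator vector of the common set shared by all records in $G_j$, and assign each original point $p$ (the bit vector of some $S_i$) to the cluster of whichever $G_j$ it was transformed into. Again $d_H(p, c_j)$ equals the number of additions plus deletions performed on $S_i$, so the two costs match exactly.

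The argument is essentially a verification, so I do not expect a genuine obstacle. The only mildly subtle point is to note that in the forward direction the cluster center $c_j$ is a legal target set (a subset of $U$, possibly empty) and that the definition of $k$-anonymization permits us to use it as the common anonymized value; and in the reverse direction to observe that no feasible $\mathcal{S}_\pm$ is ever forced to produce an empty equivalence class, so the induced grouping is well defined. Both are immediate from the definitions in Section~\ref{sec:kanondefn} and the definition of the $k$-group clustering problem.
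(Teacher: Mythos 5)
Your proposal is correct and follows essentially the same route as the paper's proof sketch: map each cluster to a $k$-anonymous group by transforming every set to match its cluster center, so that additions plus deletions equal the Hamming distance, and reverse the construction by taking the equivalence classes of the anonymized dataset as clusters with their common set as center. You simply spell out the "vice versa" direction and the cost accounting more explicitly than the paper does.
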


\noindent\textit{Proof Sketch.} For every cluster in ${\cal S}_{group}$, create a $k$-anonym- ous group of the sets corresponding to the cluster points in ${\cal S}_{\pm}$. $k$-anonymity is achieved by adding or deleting items as necessary so that every set in the group becomes identical to the set corresponding to the cluster center.  The sum of the Hamming distances of points to their cluster centers in ${\cal S}_{group}$ thus corresponds to the total number of additions and deletions of items to obtain the solution ${\cal S}_{\pm}$. \\

Given Proposition~\ref{clusteringprop}, we can now focus on solving the $k$-group clustering problem from here on. In this regard, the following result tells us that it suffices to consider potential cluster centers from amongst the data points themselves. 

\begin{theorem}~\label{discrete}
The cost of the optimal solution to the $k$-group clustering problem when the cluster centers are chosen from amongst the set of data points themselves is at most twice the cost of the optimal solution to the $k$-group clustering problem when the cluster centers are allowed to be arbitrary points in $\{0,1\}^m$. 
\end{theorem}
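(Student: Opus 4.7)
The plan is to start from an arbitrary-center optimal solution and show that, cluster by cluster, we can replace each (possibly non-data) center by an appropriately chosen data point at a multiplicative cost of at most $2$. Crucially, we keep the partition of points into clusters the same; since each cluster already has size at least $k$, the new solution is still feasible for the $k$-group clustering problem. The argument relies only on two facts about Hamming distance on $\{0,1\}^m$: it is a metric (so the triangle inequality is available), and the cost function of a cluster is a sum of distances to the center.

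So let $OPT_{arb}$ denote an optimal solution with arbitrary centers. Consider one cluster in $OPT_{arb}$, consisting of data points $p_1,\ldots,p_t$ (with $t \ge k$) and center $c^* \in \{0,1\}^m$, with cost $\sum_{i=1}^{t} d_H(p_i, c^*)$. Let $p_j$ be the data point in this cluster that minimizes $d_H(p_j, c^*)$. The main computation is the triangle inequality applied termwise:
\begin{eqnarray*}
\sum_{i=1}^{t} d_H(p_i, p_j) &\leq& \sum_{i=1}^{t} \bigl(d_H(p_i, c^*) + d_H(c^*, p_j)\bigr) \\
&=& \sum_{i=1}^{t} d_H(p_i, c^*) + t \cdot d_H(c^*, p_j).
\end{eqnarray*}
Since $p_j$ minimizes $d_H(p_i,c^*)$ over $i$, the minimum is at most the average, giving $t \cdot d_H(c^*, p_j) \leq \sum_{i=1}^{t} d_H(p_i, c^*)$. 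Hence the cost of the cluster when re-centered at the data point $p_j$ is at most twice the cost of the cluster under $OPT_{arb}$.

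Summing this per-cluster inequality across all clusters of $OPT_{arb}$ yields a feasible $k$-group clustering (same partition, centers chosen from the data points) whose total cost is at most $2\,\text{Cost}(OPT_{arb})$. The optimal data-point-centered solution can only be cheaper than this particular one, which gives the claimed bound. I do not expect a real obstacle here; the only things to check are that Hamming distance satisfies the triangle inequality and that one is allowed to pick the candidate center from within the cluster itself, both of which are immediate. One minor subtlety worth flagging in the write-up is the edge case where $c^*$ is itself already a data point, in which case the inequality is trivially tight with no slack needed.
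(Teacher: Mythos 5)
Your proof is correct, but it takes a genuinely different route from the paper. You fix the optimal clusters and re-center each cluster at the data point closest to its arbitrary center $c^*$, then apply the triangle inequality termwise and bound $t\cdot d_H(c^*,p_j)$ by $\sum_i d_H(p_i,c^*)$ via the min-is-at-most-average observation; this is a clean deterministic argument that uses nothing about Hamming distance beyond the fact that it is a metric, so it generalizes verbatim to any metric space and, as a bonus, exhibits an explicit choice of center. The paper instead keeps the same clusters but replaces each center with a \emph{uniformly random} data point of the cluster and computes the expected cost coordinate-by-coordinate: in a column where the cluster has $N_1$ ones and $N_0$ zeros, the expected contribution is the harmonic-mean-type quantity $2N_1N_0/(N_1+N_0)$, while the arbitrary-center optimum pays $\min(N_1,N_0)$; since $2N_1N_0/(N_1+N_0)\le 2\min(N_1,N_0)$, the expected cost is at most $2\,\mathrm{Cost}(OPT)$, and the probabilistic method yields some data-point-centered solution of cost at most twice the optimum. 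The paper's argument is tied to the coordinate structure of $\{0,1\}^m$ but shows that even a \emph{random} in-cluster center achieves the factor $2$ in expectation (and its per-column bound can be strictly better than $2\min(N_1,N_0)$ when $N_1$ and $N_0$ are comparable), which suggests a trivial randomized implementation; your argument buys metric-space generality and determinism at no loss in the approximation factor.
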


\begin{proof}
Let $OPT$ be the optimal solution to the $k$-group clustering problem when the cluster centers are allowed to be arbitrary points in $\{0,1\}^m$. Now consider a solution ${\cal S}_{rand}$ that maintains the same cluster groups as $OPT$, but replaces each cluster center with a randomly chosen data point from within the cluster. The expected cost of this solution is given below. 
$$ \text{E}[\text{Cost}({\cal S}_{rand})] = \sum_{G \in {\cal G}}\sum_{C \in {\cal C}} \frac{2N^{C_G}_1N^{C_G}_0}{N^{C_G}_1 + N^{C_G}_0}$$
Here ${\cal G}$ is the set of all clusters in ${\cal S}_{rand}$ (which is the same as the set of clusters in $OPT$). ${\cal C}$ is the columns/dimensions of the dataset $R_D$. $N^{C_G}_1$ and $N^{C_G}_0$ are the number of $1$'s and number of $0$'s respectively that the points in a cluster $G$ have in column $C$. The cost of the optimal solution on the other hand is given by
$$\text{Cost}(OPT) = \sum_{G \in {\cal G}}\sum_{C \in {\cal C}} \text{min}(N^{C_G}_1, N^{C_G}_0).$$
By simple algebraic manipulation, it is easy to see that 
$$\text{E}[\text{Cost}({\cal S}_{rand})] \leq 2\text{Cost}(OPT).$$
Since the expected cost of ${\cal S}_{rand}$ is less than twice the cost of $OPT$, there must exist some clustering solution where the cluster centers are chosen from the data points themselves whose cost is less than twice the cost of $OPT$. This completes the proof of the theorem.
\end{proof}

Theorem~\ref{discrete} considerably simplifies the clustering problem since there is now only a linear number of potential cluster centers that need be considered (as opposed to $2^m$). We can now frame this modified $k$-group clustering problem as an integer program. 

\begin{eqnarray*}
\text{min} & \sum_{i,j} x_{ij}d_{ij} &  \\
\text{s.t} & x_{ij} \leq y_j & \forall~i,j\\
 & \sum_i x_{ij} \geq ky_j & \forall~j\\
 & x_{ij}, y_j \in \{0,1\}  & \forall~i,j\\
\end{eqnarray*}

\vspace{-0.15in}
Here $y_j$ is an indicator variable that indicates whether or not data point $S_j$ is chosen as a cluster center. $x_{ij}$ is an indicator variable that indicates whether or not data point $S_i$ is assigned to cluster center $S_j$ and $d_{ij}$ is the Hamming distance between data points $S_i$ and $S_j$. This integer programming formulation is exactly equivalent to the load-balanced facility location problem studied in~\cite{munagala,karger,zoya}. The cluster centers can be thought of as facilities, and the data points as demand points. The task then is to open facilities and assign demand points to opened facilities so that the sum of the distances to the facilities is minimized and every facility has at least $k$ demand points assigned to it. The algorithms for this problem work by solving a modified instance of a regular facility location problem (without the load balancing constraints), and then grouping together facilities that have fewer than $k$ demand points assigned to them. The result from~\cite{zoya} in conjunction with Theorem~\ref{discrete} and Proposition~\ref{clusteringprop}, gives us the following result. 

\begin{theorem}
There exists an O(1)-approximation algorithm for the $k$-anonymization problem for set-valued data.
\end{theorem}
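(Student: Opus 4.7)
The plan is to chain together the three reductions already developed in the section and then invoke an existing constant-factor approximation for load-balanced facility location. First I would apply Proposition~\ref{clusteringprop} to translate the $k$-anonymization instance on $D$ into a $k$-group clustering instance on the binary-vector dataset $R_D$, where costs are measured by sum of Hamming distances. Any solution produced for the clustering side can be lifted back to an anonymization solution of the same cost, so any $\beta$-approximation for clustering yields a $\beta$-approximation for anonymization.

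Next I would invoke Theorem~\ref{discrete} to restrict attention, at the price of a factor $2$ in the approximation ratio, to clusterings whose centers are chosen from among the input points themselves. This step is crucial because it cuts the number of candidate centers from $2^m$ to $n$ and makes it possible to encode the problem as the integer program displayed in the text. With discrete centers in place, the IP is literally the load-balanced facility location IP: the $y_j$ variables decide which data points act as facilities, the $x_{ij}$ variables assign demands to facilities, the capacity-style constraint $\sum_i x_{ij} \geq k y_j$ enforces the lower bound of $k$ members per cluster, and the objective $\sum_{ij} x_{ij} d_{ij}$ is exactly the total assignment cost under the Hamming metric.

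Finally I would appeal to the constant-factor approximation for load-balanced facility location established in~\cite{zoya} (with related work in~\cite{munagala,karger}). Applied to our instance, this produces a clustering with cost at most $c$ times the optimum of the discrete-center clustering problem for some absolute constant $c$. Combining this with the factor $2$ from Theorem~\ref{discrete} and the cost-preserving translation from Proposition~\ref{clusteringprop}, the overall guarantee is $2c = O(1)$ relative to the optimum of the original $k$-anonymization problem on $D$.

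The only non-routine obstacle is verifying that the load-balanced facility location algorithm of~\cite{zoya} applies in our setting: its guarantees are typically stated for metric instances, and I would need to note that the Hamming distance on $\{0,1\}^m$ is a metric (nonnegative, symmetric, satisfies the triangle inequality, and vanishes exactly on identical vectors), so the black-box approximation carries over without modification. Everything else is bookkeeping of constants across the three reductions.
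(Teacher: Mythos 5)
Your proposal is correct and follows essentially the same route as the paper: Proposition~\ref{clusteringprop} to pass to $k$-group clustering, Theorem~\ref{discrete} to restrict to discrete centers at a factor of $2$, the integer-program identification with load-balanced facility location, and the constant-factor algorithm of~\cite{zoya} to conclude an $O(1)$ guarantee. Your added remark that the Hamming distance is a metric (so the facility-location guarantee applies) is a reasonable piece of due diligence that the paper leaves implicit.
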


To reemphasize the earlier footnote, the approximation algorithms for suppression-based anonymization or load-bala- nced facility location never need to explicitly compute and operate on the bit vector representations of the records. They can operate directly on the set representations, computing distances between pairs of sets. Algorithm~\ref{flip*alg} need not operate on the bit-vector representations either. It can simply take every $k$-group of sets and add every majority item in the group to all the sets in the group, while deleting other items.
 
\section{Experiments}\label{aolexp}
\vspace{0.1in}
In this section we experimentally demonstrate applicability of our anonymization algorithms to the AOL query log dataset. Recall (Definition~\ref{kanondefn}) that in this dataset records correspond to user sessions and items correspond to the query terms/keywords. As mentioned earlier, the query log dataset also contains other attributes that we ignore in this paper (see Section~\ref{discussion} for a discussion). Our goal then is to add or delete keywords from user sessions so that every session becomes identical to at least $k-1$ others. 

The anonymization algorithms from Section~\ref{approxalgs} cannot be directly applied to the AOL dataset for several reasons: (1) No two users in the dataset are likely to be similar on all their queries since each user session is fairly large, representing $3$ months of queries. The algorithms  when directly applied to the user sessions would thus result in a large number of additions and deletions. (2) The dataset consists of millions of users. The algorithms from Section~\ref{approxalgs} have a quadratic running time and therefore cannot be practically applied to such real world datasets directly. And (3) Different keywords from different users could often be misspellings of each other or derivations from a common stem. The conditions for considering two user sessions to be ``identical'' thus need to be relaxed. 

We describe below the steps we took to overcome these three problems. 

\subsection{Separating User Sessions into Threads}
\vspace{0.1in}
To deal with the issue of large user sessions, we considered a relaxed problem definition: Each user session was first divided into smaller threads and a different random identifier was assigned to each thread. We then considered the anonymization problem over these threads instead of the original sessions. Each user thread was treated as a set of keywords and our goal was to add or delete keywords from user threads so that every user thread became identical to threads from at least $k-1$ other users. 

One trivial way to divide sessions into threads is to treat every single query from a user as a thread of its own and assign a random identifier to it. However this would render the data nearly useless for many forms of analysis (e.g., studying patterns of query refinement). Instead ``topic-based'' threads were determined on the basis of the similarity of constituent queries. For this purpose we employed two simple measures to determine query similiarity:

\begin{itemize}
\item Edit distance: Two queries were deemed similar if the edit distance between them was less than a threshold.
\item Overlapping result sets: Two queries were deemed similar if the result sets returned for each query by a search engine had a large overlap in the top $50$ results. 
\end{itemize}

Using these similarity measures, each user session was separated into multiple threads: Queries in a user session were considered in the order of their time stamps. A query that was similar to one seen before was assigned the same identifier as the previous query. A query that was very different from any of the previously seen queries was assigned a new identifier. This was followed by another round where consecutive threads that contained similar queries were collapsed and so on. This algorithm for determining threads was run on a random sample of $\sim82K$ users who posed a total of $\sim412K$ queries. The $82K$ user sessions were split into $\sim165K$ threads. Each thread had on average $2.55$ unique keywords. 

There may of course exist more sophisticated techniques for separating sessions into topic-based threads, however this is not the focus in this paper. Note that the shift in goal from anonymizing sessions to anonymizing threads, enhances the utility of the released dataset (anonymizing entire sessions would require far too many additions and deletions), without affecting privacy too much. In fact, as we shall see in Section~\ref{casestudy}, the separation into threads itself helps in anonymization.

\subsection{Pre-clustering User Threads}
\vspace{0.1in}
As mentioned earlier, the algorithms from Section~\ref{approxalgs} have a quadratic running time, and cannot be practically applied to our dataset of user threads. To make them more scaleable, we first performed a preliminary clustering step where we clustered similar user threads together using a simple, fast clustering algorithm, and then applied the $k$-anonymization algorithms from Section~\ref{approxalgs} to the threads within each cluster. If a cluster had fewer than $k$ user threads, we simply deleted these threads altogether.  Running the $k$-anonymization algorithms within these small clusters was much more efficient than running them directly on all the user threads at once. 

To do the preliminary clustering, we used the Jaccard coefficient as a similarity measure for user threads. Recall that each thread $S_i$ is a subset of the universe of keywords $U = \{e_1, \ldots, e_m\}$. Under the Jaccard measure, the similarity of two user threads, $S_i$ and $S_j$ is given by 
$$\text{Sim}(S_i, S_j) = \frac{|S_i \cap S_j|}{|S_i \cup S_j|}$$

A straightforward clustering algorithm would involve a comparison between every pair of user threads and would thus be very ineffcient. Instead, to quickly cluster all the user threads, we used Locality Sensitive Hashing (LSH). The LSH technique was introduced in~\cite{lsh} to efficiently solve the nearest-neighbour search problem. The key idea is to hash each user thread using several different hash functions, ensuring that for each function, the probability of collision is much higher for threads that are similar to each other than for those that are different. The Jaccard coefficient as a similarity measure admits an LSH scheme called Min-Hashing~\cite{cohen,broder}.

The basic idea in the Min-Hashing scheme is to randomly permute the universe of keywords $U$, and for each user thread $S_i$, compute its hash value $\text{MH}(S_i)$ as the index of the first item under the permutation that belongs to $S_i$. It can be shown~\cite{cohen,broder} that for a random permutation the probability that two user threads have the same hash function is exactly equal to their Jaccard coefficient. Thus Min-Hashing is a probabilistic clustering algorithm, where each hash bucket corresponds to a cluster that puts together two user threads with probability proportional to their Jaccard coefficient. The LSH algorithm~\cite{lsh} concatenates $p$ hash-keys for users so that the probability that any two users $S_i$ and $S_j$ agree on their concatenated hash-keys is equal to $\text{Sim}(S_i, S_j)^p$. The concatenation of hash-keys thus creates refined clusters with high precision. Typical values for $p$ that we tried were in the range $2-4$. 

Clearly generating random permutations over the universe of keywords and storing them to compute Min-Hash values is not feasible. So instead, we generated a set of $p$ independent, random seed values, one for each Min-Hash function and mapped each user thread to a hash-value computed using the seed. This hash-value serves as a proxy for the index in the random permutation. The approximate Min-Hash values thus computed have properties similar to the ideal Min-Hash value~\cite{approxminhash}. See~\cite{approxminhash} for more details on this technique.

As a result of running the LSH-based clustering algorithm on our user threads, we otained a total of $\sim 84K$ clusters. Each cluster contained an average of $2$ user threads. The largest cluster contained $\sim 2800$ threads and corresponded to the queries that searched for `Google'!

Again, there may exist more sophisticated techniques for clustering similar user threads together, however this is not the focus of this paper, which is meant to be more of a proof of concept.

\subsection{$k$-Anonymity within Clusters}
\vspace{0.1in}
Now within each cluster generated using the LSH scheme above, we ran the $k$-anonymization algorithm from Section~\ref{approxalgs} (i.e., the suppression algorithm from~\cite{park} followed by the application of Algorithm~\ref{flip*alg}).

Before proceeding further, we need to clarify the criterion that was used for deeming two user threads to be identical. As mentioned earlier, different user threads might contain keywords that are actually just misspellings of each other or derivations from a common stem. To deal with this issue, we once again resorted to LSH. We treated each user thread as a set of Locality Sensitive Hashes~\cite{cohen,broder} of its constituent keywords, i.e., a user thread $S_i = \{e_1, \ldots, e_\ell\}$ now became  $S_i = \{\text{LSH}(e_1), \ldots, \text{LSH}(e_\ell)\}$ where $\text{LSH}(e_j)$ is a concatenation of Min-Hashes of the keyword $e_j$\footnote{Each keyword can be treated as a multiset of characters}. Two user threads were considered identical if they had the same set of hashes. 

Now if a $k$-anonymous solution for a particular cluster deemed that a certain LSH value must be deleted from a particular user thread, we simply deleted all the keywords from the user thread that generated that LSH value. If the solution asked for a LSH value to be added to a user thread, we added to the thread one of the keywords \textit{from its cluster} that generated the LSH value. Threads in clusters of size less than $k$ were entirely deleted. 

Figure~\ref{differingk} shows the total number of additions and deletions of keywords that were made for different values of $k$. As would be expected, as $k$ increases, the total number of additions and deletions that need to be made to achieve $k$-anonymity increases. The number of additions is a small fraction of the total cost, and surprisingly goes down as $k$ increases. 

\begin{figure}
\begin{center}
\includegraphics[width = 2.3in,angle = 270]{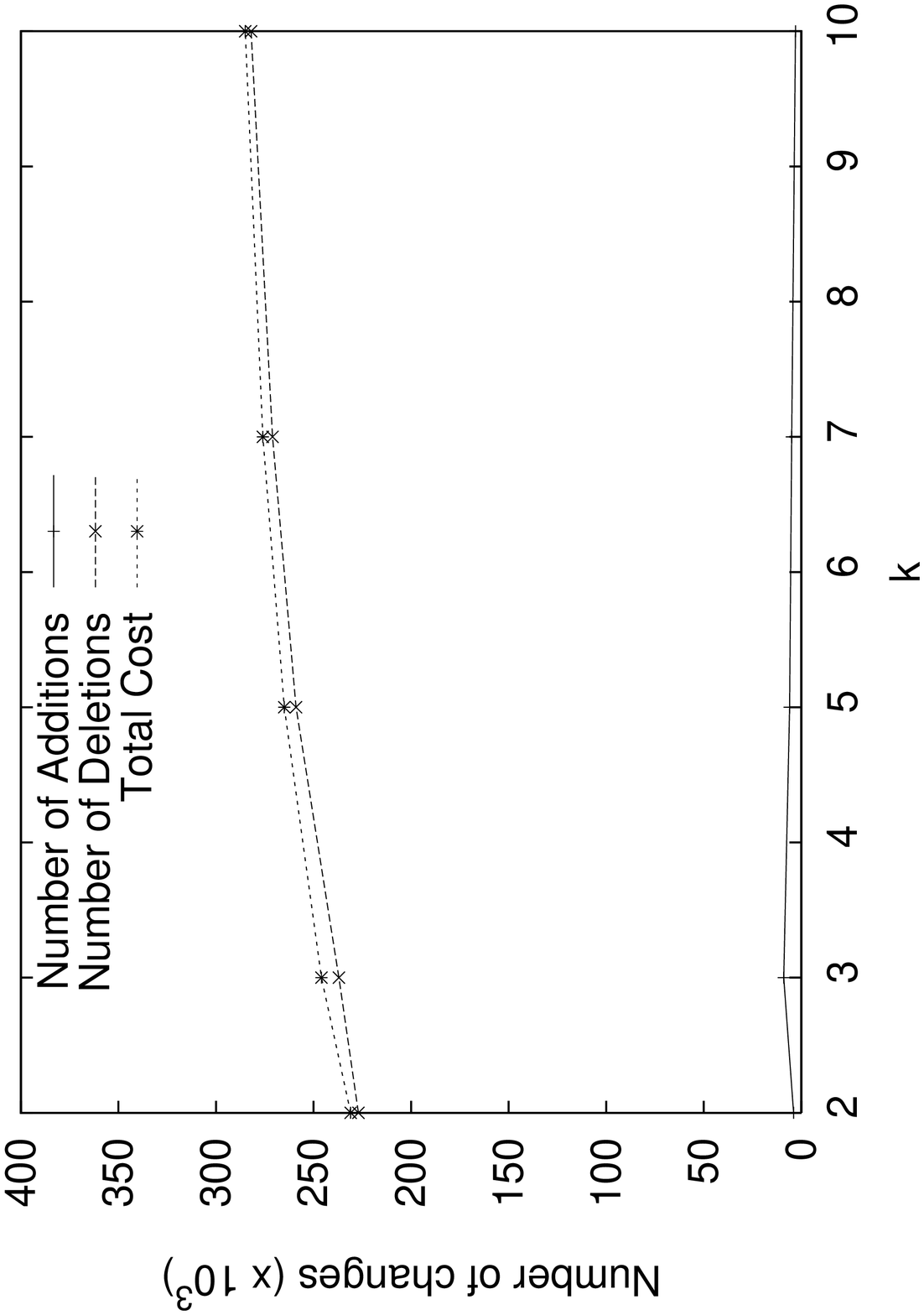}
\caption{Cost of achieving $k$-anonymity}\label{differingk}
\end{center}
\end{figure}

\subsection{Case Study}\label{casestudy}
\vspace{0.1in}
As anecdotal evidence of the effectiveness of our algorithms in anonymizing query logs, we looked at the query logs of user 4417749 who had been previously been identified as Ms. Thelma Arnold from Lilburn, Georgia. 

\begin{figure}
\begin{center}
\subfigure[User 4417749's Session]
{\begin{tabular}{l} 4417749 pine straw lilburn delivery\\
4417749 pine straw delivery in gwinnett county\\
4417749 pine straw in lilburn ga.\\
4417749 atlant humane society\\
4417749 atlanta humane society\\
4417749 dekalb animal shelter\\
4417749 dekalb humane society\\
4417749 gwinnett animal shelter\\
4417749 doraville animal shelter\\
4417749 humane society\\
4417749 gwinnett humane society\\
4417749 seffects of nicotine\\
4417749 effects of nicotine\\
4417749 nicotine effects on the body\\
4417749 jarrett arnold\\
4417749 jarrett t. arnold\\
4417749 jarrett t. arnold eugene oregon\\
4417749 eugene oregon jaylene arnold\\
4417749 jaylene and jarrett arnold eugene or.\\
$\vdots$\\\\
\end{tabular} \label{44a}}
\subfigure[User 4417749's anonymized threads]
{\begin{tabular}{l}\\
1 \st{4417749} pine straw \st{lilburn} \st{delivery} \bf{mulch}\\
1 \st{4417749} pine straw \st{delivery} in \st{gwinnett} \st{county}\\
1 \st{4417749} pine straw in \st{lilburn} \st{ga.}\\
---------------------------------------------\\
2 \st{4417749} atlant humane society \bf{county}\\
2 \st{4417749} atlanta humane society\\
2 \st{4417749} dekalb animal shelter\\
2 \st{4417749} dekalb humane society\\
2 \st{4417749} \st{gwinnett} animal shelter\\
2 \st{4417749} \st{doraville} animal shelter\\
2 \st{4417749} humane society\\
2 \st{4417749} \st{gwinnett} humane society\\
---------------------------------------------\\
3 \st{4417749} seffects of nicotine\\
3 \st{4417749} effects of nicotine\\
3 \st{4417749} nicotine effects on the \st{body}\\
---------------------------------------------\\
4 \st{4417749} \st{jarrett arnold}\\
4 \st{4417749} \st{jarrett t. arnold}\\
4 \st{4417749} \st{jarrett t. arnold eugene oregon}\\
4 \st{4417749} \st{eugene oregon jaylene arnold}\\
4 \st{4417749} \st{jaylene and jarrett arnold eugene or.}\\
$\vdots$\\\\
\end{tabular}\label{44b}}
\end{center}
\caption{User 4417749's Query Logs}\label{4417749}
\end{figure}

Figure~\ref{44a} shows a sample of user 4417749's query logs. Misspellings have been maintained, however repeated queries have been removed. As can be seen, the user searched for some fairly generic queries such as the ``effects of nicotine on the body''. However she also posed several identifying queries. For instance, she queried for humane societies and animal shelters in Gwinnett county, Georgia, revealing herself to be an animal lover in Gwinnett county. Further, she queried for pine straw delivery in Lilburn, Gwinnett, thereby revealing herself to be a resident of Lilburn, Gwinnett. Finally, her queries for relatives in Oregon revealed that her last name was ``Arnold''. 

Figure~\ref{44b} shows the result of running our $k$-anonymizat- ion algorithm for $k=3$. Notice first that the division of Ms. Arnold's session into threads itself goes some way in anonymization by de-correlating her various query topics. The session sample was divided into a thread for pine straw delivery, a thread for animal shelters and humane societies, a thread for the effects of nicotine and a thread for the queries about relatives in Oregon. Each thread was assigned a separate identifier. 

The threads were treated as sets of unique keywords (not depicted in the figure) and were then clustered with the threads of other users using LSH. The anonymization algorithms were run within the resulting clusters. If a particular keyword was to be deleted from a particular thread, we deleted every occurence of that keyword from the original queries of the thread. If a keyword was to be added to a thread, we added it to one of the original queries of the thread. The result was that some threads such as the nicotine thread were left relatively untouched. In the thread for pine straw delivery, the keywords `lilburn', `delivery', `gwinnett', `county' and `ga.' were deleted, and the keyword `mulch' was added instead. This is because other users in the thread's cluster, querying for `pine straw', queried for it in conjunction with the keyword `mulch'. Similarly, in the thread for animal shelters and humane societies, the keywords `gwinnett' and `doraville' were removed, while the keyword `county' was added since many users searched for animal shelters in `dekalb county'. Finally, the thread for the relatives in Oregon was deleted altogether because not a sufficient number of threads from other users got clustered with it. Many users queried for `arnold schwarzenegger', however none of their threads fell in the same cluster!

This example shows that our algorithm does the intuitively right thing. Identifying keywords are removed and keywords that commonly occur in conjunction with other keywords are added to a user's threads. The guarantee is that every user thread will look like the threads of at least $k-1$ other users, and this guarantee is achieved while making a close to minimal number of additions and deletions. 

\subsection{Discussion}~\label{discussion}

While the example of Ms. Thelma Arnold seems to indicate that our anonymization algorithms do the right thing for query logs, our experimental work here is in reality a first step due to the complex nature of the dataset. Several points require further discussion.

\vspace{0.1in}
{\bf Other Attributes:} As mentioned earlier, query logs contain other information about user activity, namely time stamp information for when a query was posed and the query result that was clicked on. Our algorithm focussed on anonymizing just the queries themselves, whereas it is conceivable that these other attributes of the dataset may also be used in launching privacy attacks. One possible anonymization approach is to treat these other attribute values as items of the universe as well and proceed as before. So for example, if a majority of users queried for the Indiana Jones movie on the day that it was released, then this day would be added as part of the time stamp to all user threads on the Indiana Jones movie. The drawback to this approach could be a loss of very fine-grained time stamp information and a better understanding of utility is required before this approach can be recommended. 

\vspace{0.1in}
{\bf Privacy:} In adapting our algorithm to the query logs, we considered a relaxation of the original problem statement: instead of anonymizing entire sessions, we anonymized threads. The privacy implications of this relaxation need to be further examined. At first glance, it seems that the division of the user sessions into threads only helps in our privacy goals by de-correlating a user's query topics. However there is no ``proof'' that a user's threads could not somehow be stitched together to reconstruct his session, which would then no longer be $k$-anonymous. An experimental or theoretical study of the implications of our problem relaxation would be an interesting avenue for future work.

\vspace{0.1in}
{\bf Utility:} Our approach of treating a thread as a set of keywords affects the utility of the released dataset. For example, in Figure~\ref{44b}, the keyword `county' was added to the query for `atlant humane society' since it was to be indiscriminately added to any one of the queries in the thread. In reality it should have been added to the query for `dekalb animal shelter' and that too in the semantically correct position as `dekalb county animal shelter'. Thus by treating threads as sets of keywords, we loose potentially important information about the ordering of keywords within queries. Another point regarding utility, is our criterion for measuring the utility of the released dataset. As in traditional $k$-anonymity work, the criterion we used was to minimize the total number of changes made to the dataset. A better metric for measuring the utility of the released dataset would be to measure the impact of the anonymization on algorithms that actually use the dataset. For example, how well does a search engine's query suggestion algorithm work when run on the released dataset instead of the original. This is a very interesting question, that would need to be ultimately answered for evaluating the utility of any anonymization scheme.

\section{Summary and Future Work}
\vspace{0.1in}
In this paper we introduced the $k$-anonymization problem for set-valued data. Algorithms with approximation factors of $O(k\log k)$ and $O(1)$ for the problem were developed. We applied our anonymization algorithms to the AOL query log dataset. In order to scale the algorithms to deal with the size of the dataset, we proposed a division of the dataset into clusters, followed by the application of anonymization algorithms within the clusters. Besides the problems mentioned in Section~\ref{discussion}, there are several other avenues for future work. For instance, one interesting research direction would be to develop scaleable anonymization algorithms for massive modern day datasets with provable approximation guarantees.  Another important research question is how such algorithms can be applied to anonymize datasets on the fly as new records get added to them. For example, as a search engine receives new queries, how should it anonymize them in an online fashion before storing them.

\section{Acknowledgements}
The authors would like to thank Tomas Feder, Evimaria Terzi and An Zhu for many useful discussions.

\bibliographystyle{abbrv}
\bibliography{setanonbib}

\end{document}